\documentclass{article}
\usepackage{spconf,amsmath,graphicx}
\usepackage{epstopdf}
\usepackage{amsthm}
\usepackage{fancyhdr}
\pagestyle{fancy}
\newcommand{\bm}[1]{\mbox{\pmb{$#1$}}}

\newtheorem {Lemma}{Lemma}

%

\title{\Large Multi-User Communication in Difficult Interference}
%
%
\name{Dushyantha A. Basnayaka, and Tharmalingam Ratnarajah}
\address{Institute for Digital Communication, School of Engineering, \\ University of Edinburgh, Mayfield Road, Edinburgh EH9 3JL\\E-mail: d.basnayaka@ed.ac.uk}
\rhead{}
\cfoot{IEEE ICASSP 2019, Brighton, UK}

\begin{document}
%
%
\maketitle
\begin{abstract}
The co-channel interference (CCI) is one of the major impairments in wireless communication. CCI typically reduces the reliability of wireless communication links, but the ``difficult'' CCI which is no more or less strong to the desired signal destroys wireless links despite having myriad of CCI mitigation methods. It is shown in this paper that $M$-QAM (Quadrature Amplitude Modulation) or similar modulation schemes which modulate information both in in-phase and quadrature-phase are particularly vulnerable to difficult CCI. Despite well-known shortcomings, it is shown in this paper that $M$-PAM or similar schemes that use a single dimension for modulation provides an important mean for difficult CCI mitigation. 
\end{abstract}
\begin{keywords}
CCI mitigation, BPSK, PAM, Joint-ML
\end{keywords}
\section{Motivation}\label{sec:motivation}
Consider a typical multi-user communication scenario \cite{Paulraj03}, where many sensor nodes communicate with an entry-level access point (AP). One of the simplest methods to allow all users (or nodes) to communicate reliably is to use time division multiple access (TDMA) with only one node being given the exclusive access to the channel in a given time interval \cite{Proakis91}. However, for the sake of higher system spectral efficiency, one can pair users, and allow co-channel communication forming a multiple-access channel (MAC). A communication scenario with such two co-channel users (both users and AP with a single antenna each) gives rise to the complex scalar channel with interference \cite{Zandar92}:
\begin{align}
y &= h_1 s_1 + h_2 s_2 + n, \nonumber
\end{align}
where $s_1$ and $s_2$ are source symbols drawn from digital constellation (say BPSK), and $h_1$ and $h_2$ are scaler complex fading coefficients from node 1 and node 2 to AP respectively. $n$ is additive noise. It is now well-known that if the relative channel strength of users, which is typically statistically measured by the ratio between $\mathcal{E}\left\{\left|h_1\right|^2\right\}$ and $\mathcal{E}\left\{\left|h_2\right|^2\right\}$ is not unity, the inference estimation and subtraction (i.e., SIC often used at AP in technologies like NOMA \cite{SaKiBe13}) can be used. However, in difficult co-channel interference (CCI), which is no more or less stronger to the desired signal, AP can not use simple single user detection techniques at all (they fail hopelessly). Space domain CCI mitigation techniques cannot be used owing to the lack of multi-antennas at the AP \cite{Paulraj03}. The last resort is to use joint maximum likelihood (ML) detection. For instance, data of user 1 can be detected as \cite{Stuber02}:
\begin{align}\label{eq:ML:A}
\hat{s}_1=\min_{s_1,s_2 \in \mathcal{Q}} \left|y - h_1 s_1 - h_2 s_2\right|,
\end{align}
where $\mathcal{Q}$ is the constellation of $s_1$ and $s_2$. In this paper, a simple feedback signal processing scheme (in its simplest form that exploits single-phase modulation schemes like BPSK or $M$-PAM ) is proposed and analyzed to sustain co-channel communication in difficult interference. It has been shown that the proposed scheme not only uses simple single user detection, but also outperforms even the ML scheme in \eqref{eq:ML:A}, which is conventionally regarded as a fundamental limit.
\section{Main System Model}\label{sec:main-system-model}
We consider the simplest multi-user system that two co-channel users communicate with an AP. Let the complex fading coefficients between source 1 to AP be $h_1=\gamma_1 e^{j\alpha_1}$, and between source 2 to AP be $h_2=\gamma_2 e^{j\alpha_2}$, where $\gamma_1$ and $\gamma_2$ are denoted as the magnitude coefficients and $\alpha_1$ and $\alpha_2$ are denoted as phase coefficients of respective fading coefficients. Throughout this paper, we consider an equally strong links, where average link gains are the same, i.e., $\mathcal{E}\left\{|h_1|^2\right\}=\mathcal{E}\left\{|h_2|^2\right\}=g$, and it is assumed that AP has the full and exact knowledge of $h_1$ and $h_2$.\\ 
It is assumed that both users use scaler precoding, and let the unit power precoding coefficients be $e^{j\beta_1}$ and $e^{j\beta_2}$ for user 1 and 2 respectively. The received signal by the AP is:
\begin{align}\label{eq3}
r &= h_1 x_1 + h_2 x_2 + n = \gamma_1 e^{j\alpha_1} x_1 + \gamma_2 e^{j\alpha_2} x_2 + n,\\
&= \sqrt{E_s}\gamma_1 e^{j(\alpha_1+\beta_1)} s_1 + \sqrt{E_s}\gamma_2 e^{j(\alpha_2+\beta_2)} s_2 + n,
\end{align}
where $s_1,s_2 \in \mathcal{P}$. Herein $\mathcal{P}$ denotes $M$-PAM constellations, and $x_1$ and $x_2$ denote the transmitted signals of source 1 and 2 respectively. $n$ is the zero mean, variance, $\sigma^2$, additive white Gaussian noise (AWGN) sample. We also use following power normalizations as $\mathcal{E}\left\{|x_1|^2\right\}=\mathcal{E}\left\{|x_2|^2\right\}=E_s$ and $\mathcal{E}\left\{|s_1|^2\right\}=\mathcal{E}\left\{|s_2|^2\right\}=1$. It is AP's desire to detect both $s_1$ and $s_2$, and in the case of $M$-QAM, AP uses the following statistics to detect $s_1$ $\hat{y}_1=\frac{e^{-j(\alpha_1+\beta_1)}}{\sqrt{E_s}\gamma_1} r$, and AP may use the following maximum likelihood (ML) detector on $\hat{y}_1$ in order to detect $s_1$ as $\hat{s}_1=\min_{s_1 \in \mathcal{Q}} \left|\hat{y}_1 - s_1\right|$. However, in the case of BPSK of $M$-PAM, the following statistics is sufficient to detect $s_1$:
\begin{align}
\hat{y}_1 &= \text{Re}\left( \frac{e^{-j(\alpha_1+\beta_1)}}{\sqrt{E_s}\gamma_1} r \right), 
\end{align} 
Similarly, $\hat{y}_2=\text{Re}\left( \frac{e^{-j(\alpha_2+\beta_2)}}{\sqrt{E_s}\gamma_2} r \right)$ is sufficient to detect $s_2$. Without loss of generality, henceforth, the detection of data of the 1st user is considered exclusively, but the extension to detection of user 2 is straightforward. $\hat{y}_1$ can be simplified as:
\begin{align}\label{eq-1}
\hat{y}_1 &= s_1 + \frac{\gamma_2}{\gamma_1} \cos(\beta_2+\alpha_2-\beta_1-\alpha_1) s_2 + \frac{\tilde{n}_1}{\sqrt{E_s}\gamma_1}, 
\end{align}
where $\tilde{n}_1=n_I \cos\theta_1+n_Q \sin\theta_1$, $\theta_1=\alpha_1+\beta_1$, and $n_I$ and $n_Q$ are the real and imaginary parts of noise, $n_1$ respectively. Clearly, $\beta_2+\alpha_2-\alpha_1-\beta_1=k\pi/2$ for $k=1,3,5, \dots$, creates an interference free link for source 1, hence the following relationship for the precoding coefficients:
\begin{align}\label{eq0}
\beta_2 - \beta_1 = \alpha_1 - \alpha_2 \pm \pi/2,
\end{align}
where it is assumed that AP can derive $\alpha_1$ and $\alpha_2$. As a result, AP randomly selects a value which is also known by source 1 for $\beta_1$ and obtains $\beta_2$ from \eqref{eq0} which is subsequently fed-back to user 2. As a result, average BER of user 1 in $M=2$ (i.e., in BPSK or 2-PAM) becomes \cite{Gold05}:
\begin{align}\label{eq1}
\!\!\!P_{b}^1 &= \mathcal{E}\!\left\{\!Q\! \left(\sqrt{\frac{2E_s \gamma_1^2}{\sigma^2}}\right)\!\right\}\!=\mathcal{E}\!\left\{\!Q\! \left(\sqrt{\frac{2E_s|h_1|^2}{\sigma^2}}\right)\!\right\},
\end{align}
which is in fact equal to BER performance of interference free link, and $Q(.)$ in the right hand side of \eqref{eq1} is the standard $Q$-function. It is important note here that \eqref{eq0} also ensures interference free links for both user 1 and 2.
\subsection{Practical Considerations}\label{sec:practical-considerations}
Often exact feedback is not possible, and we hence consider a quantized feedback scheme. In this scheme user 2 requires the knowledge of $e^{j\beta_2}$ as accurately as possible. The AP uses a quantizer, $\mathcal{Q}_a$ based on an unit circle on the complex plan with $B$ equal size annular regions. Let the complex codebook of $\mathcal{Q}_a$ be $\mathcal{B}=\left\{c_1, \dots,c_k,\dots,c_B\right\}$. Consequently, the quantized version of $e^{j\beta_2}$, $e^{j\hat{\beta}_2}$ is mathematically given by:
\begin{align}
e^{j\hat{\beta}_2} &= \mathcal{Q}_a \left(e^{j\beta_2}\right) = \min_{c\in \mathcal{B}} \left|e^{j\beta_2} - c\right|^2,
\end{align}
and its codebook index is fed back to user 2 by using $\log_2 B$ number of bits.  
\section{Performance Analysis}\label{sec:performance-analysis}
Without loss of generality, the error performance of user 1 is considered. The expression in \eqref{eq1} gives the BER of 2-PAM with ideal CSIT. Understandably, the limited CSIT degrades BER performance, and Lemma \ref{lemma1} below summarizes BER of the proposed signaling scheme with limited CSIT.
\begin{Lemma}\label{lemma1}
	The average BER of user 1 with BPSK (2-PAM) of the signal processing scheme in Sec. \ref{sec:main-system-model} in Rayleigh fading with ideal CSI is available at AP, but with quantized CSIT being available at users is given by:
	\begin{align}\label{ber_eq:A}
	P_b^1 &= \mathcal{E}_z\left\{ Q\left(\sqrt{\frac{2E_s}{\sigma^2}}z\right) \right \}, 
	\end{align}
	where $z=\gamma_1+\gamma_2v$, and $-\infty \leq z \leq \infty$. Both $\gamma_1$ and $\gamma_2$ are Rayleigh distributed, and $v \sim \mathcal{U}[-\pi/B,\pi/B]$ in Rayleigh fading, where $B$ is the number of quantization levels for $\beta_2$.
\end{Lemma}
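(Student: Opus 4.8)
The plan is to propagate the quantization error of $\beta_2$ into the residual interference coefficient of \eqref{eq-1}, reduce the resulting model to a single scalar BPSK channel with a random (signal-plus-interference) amplitude, and then average the Gaussian pairwise error probability over the fading and over that amplitude. So the target expression \eqref{ber_eq:A} should drop out once the random amplitude is correctly identified as $z=\gamma_1+\gamma_2 v$.

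First I would write the phase actually used by user~2 as $\hat\beta_2=\beta_2+\epsilon$, where $\beta_2$ is the ideal value from \eqref{eq0} and $\epsilon=\hat\beta_2-\beta_2$ is the quantization error. Since in Rayleigh fading $\alpha_1-\alpha_2$ is uniform on the circle and independent of the magnitudes $\gamma_1,\gamma_2$, the point $e^{j\beta_2}$ is uniform on the unit circle; partitioning that circle into $B$ equal arcs then makes $\epsilon$ uniform on $[-\pi/B,\pi/B]$ and independent of $\gamma_1,\gamma_2$. Substituting $\hat\beta_2$ into \eqref{eq-1} and using \eqref{eq0}, the interference coefficient becomes $\cos(\hat\beta_2+\alpha_2-\beta_1-\alpha_1)=\cos(\pm\pi/2+\epsilon)=\mp\sin\epsilon$, which for a reasonably fine quantizer I would replace by its first-order value $\mp\epsilon$; this is the step that makes the residual-interference variable exactly uniform rather than only uniform to first order. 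Thus $\hat y_1 = s_1 \mp \tfrac{\gamma_2}{\gamma_1}\,\epsilon\, s_2 + \tilde n_1/(\sqrt{E_s}\,\gamma_1)$ with $s_1,s_2\in\{\pm1\}$.

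Next I would establish the noise statistics: $\tilde n_1=n_I\cos\theta_1+n_Q\sin\theta_1$ is a real Gaussian with variance $\sigma^2/2$, so $\tilde n_1/(\sqrt{E_s}\,\gamma_1)$ is zero-mean Gaussian with variance $\sigma^2/(2E_s\gamma_1^2)$. Conditioning on $\gamma_1,\gamma_2,\epsilon,s_2$ and taking $s_1=+1$ without loss of generality (the constellation and the sign detector are symmetric), an error occurs exactly when the noise overcomes the deterministic value $1\mp\tfrac{\gamma_2}{\gamma_1}\epsilon s_2$, giving the conditional error probability $Q\!\big(\sqrt{2E_s/\sigma^2}\,(\gamma_1\mp\gamma_2\epsilon s_2)\big)$; the same $Q$ expression also covers the regime in which the residual interference alone already flips the transmitted symbol, since $Q$ is defined for negative arguments.

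Finally, since $\epsilon$ is symmetric about $0$ and $s_2\in\{\pm1\}$ is equiprobable and independent of the channel, $\mp\gamma_2\epsilon s_2$ has the same distribution as $\gamma_2 v$ with $v\sim\mathcal U[-\pi/B,\pi/B]$; collecting the deterministic shift and the residual interference into $z=\gamma_1+\gamma_2 v$, which ranges over the whole real line because $\gamma_2|v|$ can exceed $\gamma_1$, and averaging over $z$ yields \eqref{ber_eq:A}. I expect the main obstacle to be the bookkeeping in this last step --- verifying that the sign ambiguity $\pm\pi/2$ of \eqref{eq0} and the random data symbol $s_2$ can both be absorbed into the single symmetric variable $v$ --- together with being explicit about the small-angle replacement $\sin\epsilon\approx\epsilon$ on which the exact uniformity of $v$ rests.
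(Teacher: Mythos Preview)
Your proposal is correct and follows essentially the same route as the paper: substitute the quantized phase into \eqref{eq-1}, reduce $\cos(\pi/2+\epsilon)$ to $-\sin\epsilon$, identify the real Gaussian noise of variance $\sigma^2/2$, and average the conditional $Q$-function over the random amplitude $z=\gamma_1+\gamma_2 v$. The only organizational difference is that the paper keeps $v=\sin\epsilon_2$ exact in the proof of this lemma and defers the small-angle step $\sin\epsilon_2\approx\epsilon_2\sim\mathcal U[-\pi/B,\pi/B]$ to a separate Lemma~\ref{Lemma2}, and it fixes $(s_1,s_2)=(1,-1)$ by symmetry rather than absorbing the sign of $s_2$ and the $\pm\pi/2$ choice into $v$ as you do.
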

\begin{proof}
Let the quantized version of $\beta_2$, $\hat{\beta}_2$ is given by:
\begin{align} \label{eq12}
\hat{\beta}_2 &= \beta_2 + \epsilon_2 \stackrel{(a)}{=} \beta_1+\alpha_1 - \alpha_2 + \frac{\pi}{2} + \epsilon_2,
\end{align}
where $\epsilon_2 \in \left[-\pi/B, \pi/B \right]$ is the scalar quantization error, and $(a)$ is from \eqref{eq0}. Consequently, \eqref{eq-1} becomes:     
\begin{align}
\hat{y}_1 &= s_1 + \frac{\gamma_2}{\gamma_1}\cos \left(\frac{\pi}{2}+\epsilon_2 \right)s_2 + \frac{\tilde{n}_1}{\sqrt{E_s}\gamma_1}, \\
&= s_1 - \left( \frac{\gamma_2}{\gamma_1}\sin \epsilon_2\right) s_2 + \frac{\tilde{n}_1}{\sqrt{E_s}\gamma_1}, \\
&= s_1 - \frac{\gamma_2}{\gamma_1}v s_2 + \frac{\tilde{n}_1}{\sqrt{E_s}\gamma_1}
\end{align}
where $v=\sin \epsilon_2$ and $\tilde{n}_1=n_I \cos\theta_1+n_Q \sin\theta_1$.
The average BER of user 1 can be obtained by symmetry as:
\begin{align}
P_b^1 &= \text{Pr}\left(E | s_1=1, s_2=-1\right),
\end{align}
where $\text{Pr}\left(E | s_1=1, s_2=-1\right)$ denotes the average BER given $s_1=1$ and $s_2=-1$. This conditional probability can be simplified to give:
\begin{align}\label{eq10}
P_b^1 &= \text{Pr}\left(E | s_1=1, s_2=-1\right) = \mathcal{E}\left\{ \text{Pr}\left( \hat{y}_1 \leq 0 \right)_{\stackrel{s_1=1}{s_2=-1}} \right\}, \nonumber \\
&= \mathcal{E}\left\{ Q \left(\sqrt{\frac{2E_s}{\sigma^2}} \left(\gamma_1+\gamma_2v\right)\right) \right\}.
\end{align}
Note that the expectation in \eqref{eq10} is over $\gamma_1$, $\gamma_2$ and $v$, and the argument inside $Q$ function in \eqref{eq10} can also be negative. Let $z=\gamma_1+\gamma_2v$ for ease of notation, and hence the average BER of BPSK of the proposed scheme can be obtained as in Lemma \ref{lemma1}.
\end{proof}
The probability density function (PDF) of $v$ is approximately uniformly distributed for large $B$, and it is proved in Lemma \ref{Lemma2}.  
\begin{Lemma}\label{Lemma2}
	Let the sine of quantization error of $\beta_2$ be $v=\sin \epsilon_2$, and $v \sim \mathcal{U}[-\pi/B,\pi/B]$, where $B$ is the number of quantization levels.
\end{Lemma}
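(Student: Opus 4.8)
The plan is to analyze the distribution of $v = \sin\epsilon_2$ where $\epsilon_2$ is the scalar quantization error of the phase $\beta_2$. First I would establish the distribution of $\epsilon_2$ itself. The quantizer $\mathcal{Q}_a$ partitions the unit circle into $B$ equal annular (arc) regions, so each codeword $c_k$ is the representative of an arc of angular width $2\pi/B$. For a point $e^{j\beta_2}$ falling in a given cell, the angular distance to the nearest codeword ranges over $[-\pi/B, \pi/B]$; assuming $\beta_2$ is uniformly distributed on $[0,2\pi)$ (which follows from $\alpha_1, \alpha_2$ being uniform phases in Rayleigh fading together with \eqref{eq0}), the quantization error $\epsilon_2$ is uniformly distributed on $[-\pi/B, \pi/B]$, independent of which cell was hit.

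Next I would push this distribution through the map $v = \sin\epsilon_2$. Since $\epsilon_2 \in [-\pi/B, \pi/B]$ and $B \geq 2$, we have $|\epsilon_2| \leq \pi/2$, so $\sin$ is monotonic on this interval and the change of variables is one-to-one. The exact density of $v$ is then $f_v(v) = \frac{B}{2\pi} \cdot \frac{1}{\sqrt{1-v^2}}$ on $v \in [-\sin(\pi/B), \sin(\pi/B)]$. The claim is that this is approximately $\mathcal{U}[-\pi/B, \pi/B]$ for large $B$. I would justify this by the small-angle expansion: for large $B$, $\epsilon_2$ is concentrated near zero, so $\sin\epsilon_2 \approx \epsilon_2$ with error $O(\epsilon_2^3) = O(1/B^3)$, and correspondingly $\sqrt{1-v^2} \approx 1$ with correction $O(1/B^2)$. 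Hence $f_v(v) \to \frac{B}{2\pi}$ on an interval that converges to $[-\pi/B,\pi/B]$, which is exactly the density of $\mathcal{U}[-\pi/B,\pi/B]$.

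The main obstacle — or rather the main point requiring care — is that the statement as written is an approximation ("approximately uniformly distributed for large $B$"), so the proof is really an asymptotic argument rather than an exact identity; I would be explicit that equality $v \sim \mathcal{U}[-\pi/B,\pi/B]$ holds only in the limit, and quantify the discrepancy. A secondary subtlety is justifying that $\epsilon_2$ is genuinely uniform on the full cell: this requires that the prior on $\beta_2$ induces a uniform distribution over each quantization cell, which I would argue from the rotational symmetry of the codebook (equal-size annular regions) combined with the uniformity of the underlying phases $\alpha_1,\alpha_2$ in Rayleigh fading. Once these two facts are in place — uniformity of $\epsilon_2$ and the small-angle linearization of $\sin$ — the result follows immediately, and this is why I expect the write-up to be short.
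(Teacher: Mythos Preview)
Your proposal is correct and follows essentially the same route as the paper: first argue that $\epsilon_2 \sim \mathcal{U}[-\pi/B,\pi/B]$ from the uniformity of the Rayleigh phases $\alpha_1,\alpha_2$ (the paper makes the $(\alpha_1-\alpha_2)\bmod 2\pi$ step explicit, which you subsume into ``$\beta_2$ is uniform''), then invoke the small-angle approximation $\sin\epsilon_2 \approx \epsilon_2$ to conclude $v \simeq \mathcal{U}[-\pi/B,\pi/B]$. Your write-up is in fact slightly more quantitative than the paper's --- you give the exact density $f_v(v)=\tfrac{B}{2\pi}(1-v^2)^{-1/2}$ and the $O(1/B^2)$ discrepancy, whereas the paper simply names the exact law ``Arcsine distributed'' and passes directly to the approximation --- but the underlying argument is identical.
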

\begin{proof}
	From Rayleigh fading assumption for $h_1$ and $h_2$, we have $\alpha_1,\alpha_2 \sim \mathcal{U}[0,2\pi]$. Though in normal circumstances, $\alpha_1 -\alpha_2$ is triangular distributed, the quantity that matters is $\left( \alpha_1-\alpha_2\right) \mod{} 2\pi$, and it is uniformly distributed. Hence, the radian angle between, $e^{j\hat{\beta}_2}$ and $e^{j\beta_2}$ is uniformly distributed over support $\left[-\pi/B,\pi/B\right]$. The definition in \eqref{eq12} gives that the angle between $e^{j\hat{\beta}_2}$ and $e^{j\beta_2}$ is indeed $\epsilon_2$, and hence $\epsilon_2 \sim \mathcal{U}\left[-\pi/B,\pi/B\right]$. Consequently, $\sin \epsilon_2$ is Arcsine distributed. However, $\sin \epsilon_2 \approx \epsilon_2$ for small $\epsilon_2$, and hence, $\sin \epsilon_2 \simeq \mathcal{U}\left[-\pi/B,\pi/B\right]$.
\end{proof}
It is important to note here that BER expression in \eqref{ber_eq:A} is significantly different from the BER expressions of conventional systems. The argument inside $Q$-function in \eqref{ber_eq:A} is not strictly positive, and hence, average BER evaluation is significantly involving. In order to obtain the average BER in Lemma \ref{lemma1}, the PDF of $z$ needs to be evaluated.
\subsection{The Derivation of PDF of $z=\gamma_1+\gamma_2v$.}
From probability theory, CDF of $z$:
\begin{align}\label{eqA1}
F_z\left(t\right) &= \text{Pr} \left(\gamma_1+\gamma_2v \leq t\right) = \text{Pr} \left(\gamma_1\leq t -\gamma_2 v\right).
\end{align}
Slight abuse of notation, we use the same symbols for both RVs and their realizations. Unlike in conventional cases, $z$ can be negative. We have the following:
\begin{align} \label{eqA11}
F_z\left(t\right) &= \begin{cases}
F^+_z\left(t\right) & t \geq 0,\\
F^-_z\left(t\right) & t < 0.
\end{cases}
\end{align}
Let $\mathcal{A}$ be the region on $\gamma_2v$-plane such that $t \geq \gamma_2v$ and $t \geq 0$. Hence, $F^+_z\left(t\right)$ can be written as:
\begin{align}\label{eqA13}
F_z^+ \left(t\right) &=
\frac{B}{2\pi} \iint_{\mathcal{A}} \left(1-e^{\frac{(t-\gamma_2v)^2}{g}}\right) f_{\gamma_2}\left(y\right)dy dv,
\end{align}
and let $\mathcal{B}$ be the region on $\gamma_2v$-plane such that $t \geq \gamma_2v$ and $t < 0$. Hence, $F^-_z\left(t\right)$ can be written as:
\begin{align}\label{eqA14}
F_z^- \left(t\right) &=
\frac{B}{2\pi} \iint_{\mathcal{B}} \left(1-e^{\frac{(t-\gamma_2v)^2}{g}}\right) f_{\gamma_2}\left(y\right)dy dv,
\end{align} 
where the fact that PDF of $\gamma_1$, $f_{\gamma_1} (x) = \frac{2x}{g}e^{-\frac{x^2}{g}}$ is used, and the fact that $v$ is uniformly distributed over $[-\pi/B,\pi/B]$ is also already incorporated into \eqref{eqA13} and \eqref{eqA14}. Note that pdf of $\gamma_2$ is $f_{\gamma_2} (y) = \frac{2y}{g}e^{-\frac{y^2}{g}}$. The \eqref{eqA13} and \eqref{eqA14} can be elaborated as:
\begin{align} \label{eqA4}
F^+_z\left(t\right) &= \frac{B}{2\pi}\int_{-\frac{\pi}{B}}^{0} \int_{0}^{\infty} \left(1 - e^{\frac{(t-yv)^2}{g}}\right) f_{\gamma_2}\left(y\right)dy dv \nonumber \\
&+ \frac{B}{2\pi}\int_{0}^{\frac{\pi}{B}} \int_{0}^{\frac{t}{v}} \left(1 - e^{\frac{(t-yv)^2}{g}}\right) f_{\gamma_2}\left(y\right)dy dv.
\end{align}
\begin{align} \label{eqA6}
F^-_z\left(t\right) &= \frac{B}{2\pi}\int_{-\frac{\pi}{B}}^{0} \int_{\frac{t}{v}}^{\infty} \left(1 - e^{\frac{(t-yv)^2}{g}}\right) f_{\gamma_2}\left(y\right)dy dv.
\end{align}  
The \eqref{eqA4} can be simplified to obtain as in \eqref{eqA5}, where a dummy function is used for the reasons of space as:
\begin{align}
\psi\left(t,v\right) &= \sqrt{\frac{\pi}{g(1+v^2)}} \frac{tv}{1+v^2} e^{-\frac{t^2}{g(1+v^2)}},
\end{align}
and $\Phi \left(.\right)$ is the standard error function. This completes the derivations. Similar fashion, $F^-_z\left(t\right)$ can also be simplified, but the result is omitted for reasons of space. The average BER can then be evaluated as:
\begin{align} \label{eqA7}
P_b^1 &= \int_{-\infty}^{0} Q \left(\sqrt{\frac{2E_s}{\sigma^2}} t \right) f_z^-\left(t\right) dt \nonumber \\ 
& \qquad \qquad \qquad + \int_{0}^{\infty} Q \left(\sqrt{\frac{2E_s}{\sigma^2}} t \right) f_z^+\left(t\right) dt,
\end{align}
where $f_z^-\left(t\right) = \partial F_z^-\left(t\right)/\partial t$ and $f_z^+\left(t\right) = \partial F_z^+\left(t\right)/\partial t$. The \eqref{eqA7} will result a multiple integral, and the exact evaluation appears to be complex with little insight. Hence, semi analytical approach is used herein to evaluate \eqref{eqA7}. 
\begin{figure*}[t!]
	\begin{align} \label{eqA5}
	F^+_z\left(t\right) &= 1 - \frac{B}{2\pi} \int_{0}^{\frac{\pi}{B}} \frac{\left(2+v^2\right)e^{-\frac{t^2}{gv^2}}}{1+v^2} dv + \frac{B}{2\pi} \int_0^{-\frac{\pi}{B}} \psi \left(t,v\right) dv - \frac{B}{\pi} \int_0^{\frac{-\pi}{B}} \psi\left(t,v\right) \Phi \left(\frac{tv}{\sqrt{g\left(1+v^2\right)}}\right) dv \\ 
	& \qquad \qquad \qquad \qquad \qquad \qquad \qquad - \frac{B}{2 \pi} \int_0^{\frac{\pi}{B}} \psi\left(t,v\right) \Phi \left(\frac{t}{v\sqrt{g\left(1+v^2\right)}}\right) dv, \nonumber  
	\end{align}
\hrule	
\end{figure*}
\section{Numerical Results}
In this section, we compare the average bit error performance (BER) of the proposed signal processing scheme with exiting schemes, namely the optimum joint ML in \eqref{eq:ML:A} in Rayleigh fading. Let the bit-energy-to-noise ration is defined as $\text{EbNo}=E_sg/\sigma^2$. Fig. 1 shows the averaged BER performance (of one of the users) of the proposed scheme for BPSK with ideal CSIT, and as can be seen on Fig. 1, proposed scheme outperforms optimum joint ML in \eqref{eq:ML:A} by approximately 1.67 \text{dB}. Fig. 2 validates the accuracy of Lemma \ref{Lemma2}, and it can be seen that Lemma \ref{Lemma2} holds for $B$ values even as low as $B=8$, which is equivalent to $3$-bit feedback. Fig. 1 again shows the BER of the proposed scheme with limited CSIT for $4/6/8$-bit feedback. As shown in Fig. 1, 4-bit feedback can achieve ideal CSIT performance upto about 10dB, and so does 8-bit feedback upto 30dB. Similarly, the proposed scheme with 4-PAM outperforms joint ML in \eqref{eq:ML:A} with 4-QAM, and results are omitted for reasons of space. Similar results also
hold for multi-antenna uplink, downlink and point-to-point MIMO communication as well. For instance:    
\subsection{Multi-Antenna AP}\label{sec:multi-antenna-ap}
Let $n_r$ be the number of antennas at AP, and consequently in uplink, the received signal is given by:
\begin{align}
\bm{r} &= \bm{h}_1 x_1 + \bm{h}_2 x_2 + \bm{n}, \\
&= \sqrt{E_s}\bm{h}_1 e^{j\beta_1}s_1 + \sqrt{E_s} \bm{h}_2 e^{j\beta_2}s_2 + \bm{n}, 
\end{align} 
which is an extended version of \eqref{eq3}, where $\bm{r},\bm{h}_1,\bm{h}_2,\bm{n} \in \mathcal{C}^{n_r \times 1}$, and $\bm{h}_1,\bm{h}_2 \sim \mathcal{CN}\left(\bm{0}, g \bm{I}\right)$. Furthermore, AWGN vector is distributed as $\bm{n} \sim \mathcal{CN}\left(\bm{0}, \sigma^2 \bm{I}\right)$. An analysis similar to the one in Sec. \ref{sec:main-system-model}, gives that the condition, $\beta_2 - \beta_1 = \pm \pi/2 - \omega$, in conjunction with maximal ratio combining (MRC) ensures interference free communication for both user 1 and 2, and also ensures $n_r$-order diversity, where $\omega=\arg \left(\bm{h}_1^H\bm{h}_2\right)$. This is a sharp contrast to typical multi-user systems with
linear receivers \cite{Dush_general_jnl}. 
\begin{figure}[t]
	\centerline{\includegraphics*[scale=0.6]{./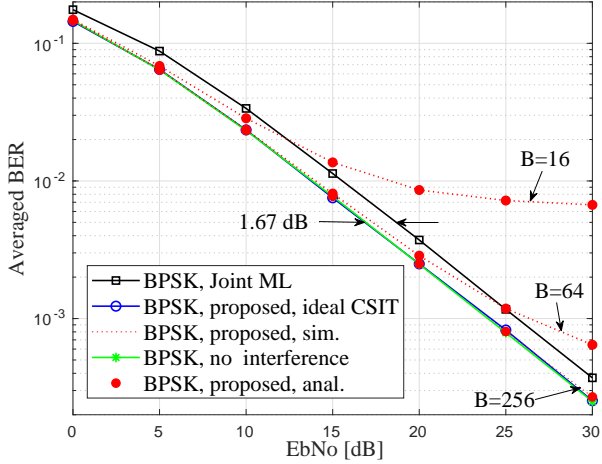}}
	\caption{BER performance of Joint-ML and the proposed system with difficult interference in Rayleigh fading.}
	\label{fig:con_en:fig1}
\end{figure}
\begin{figure}[t]
	\centerline{\includegraphics*[scale=0.6]{./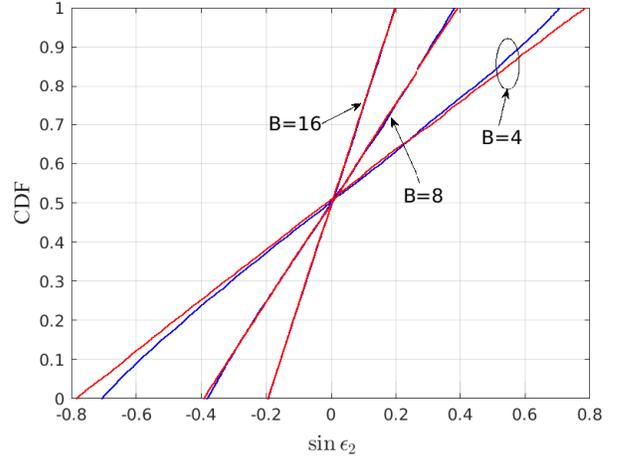}}
	\caption{Comparison of CDF of $\sin \epsilon_2$ in Lemma \ref{Lemma2} with uniform distribution (RED line) for different values of $B$.}
	\label{fig:con_en:fig2}
\end{figure}
\section{Conclusions}
This paper has proposed and analyzed a simple feedback scheme for multi-user communication in difficult interference. We compare the error performance of the proposed scheme with optimum joint ML detection. We presented an analysis for BER in fading along with key analytical challenges, and showed that proposed scheme outperforms even joint ML detection, which is widely considered as a fundamental limit, by 1.67dB in its simplest setting. The future work includes proposing approximations for the analytical results in Sec. 3, and a study to quantify the system-level implications of the proposed scheme. 
%
%
%
 
%
%
\end{document}